\newtheorem{lemma}{Lemma}
\newtheorem{proposition}{Proposition}
\newtheorem{definition}{Definition}
\newtheorem{Rem}{Remark}
\newcounter{problem}
\newcounter{subproblem}[problem]
\newenvironment{problem}{\refstepcounter{problem}{\bfseries Problem~\theproblem}}{}
\definecolor{purple}{RGB}{139, 0, 139}
\newif\iftodo   
\newif\iftodoshort  
\DeclareMathOperator*\argmax{arg \, max}		
\DeclareMathOperator*\maximize{max.}		
\newcommand{\Rmnum}[1]{\uppercase\expandafter{\romannumeral #1}}
\newcommand{\rmnum}[1]{\lowercase\expandafter{\romannumeral #1}}
\newcommand{\diag}{\mathop{\mathrm{diag}}}
\newcommand{\field}[1]{\mathbb{#1}}
\newcommand{\emenge}[1]{\mathscr{#1}}
\newcommand{\set}[1]{\mathscr{#1}}
\newcommand{\indication}[1]{\mathds{1}_{\{#1\}}}
\newcommand{\operator}[1]{\mathrm{#1}}
\newcommand{\R}{{\field{R}}}
\newcommand{\NN}{{\field{N}}} 
\newcommand{\RN}{{\field{R}}_{+}}
\newcommand{\RP}{{\field{R}}_{++}}
\newcommand{\Ns}{{\emenge{N}}}
\newcommand{\Ms}{{\emenge{M}}}
\newcommand{\Ks}{{\emenge{K}}}
\newcommand{\Ts}{{\emenge{T}}}
\newcommand{\Ss}{{\emenge{S}}}
\newcommand{\Ws}{{\emenge{W}}}
\newcommand{\sinr}{\operator{SINR}}
\newcommand{\row}{\operator{row}}
\newcommand{\ma}[1]{\boldsymbol{\mathbf{#1}}}
\newcommand{\ve}[1]{\boldsymbol{\mathbf{#1}}}
\newcommand{\vx}{\ve{x}}
\newcommand{\vc}{\ve{c}}
\newcommand{\vw}{\ve{w}}
\newcommand{\vnu}{\ve{\nu}}
\newcommand{\vp}{\ve{p}}
\newcommand{\vy}{\ve{y}}
\newcommand{\vf}{\ve{f}}
\newcommand{\vd}{\ve{d}}
\newcommand{\mX}{\ma{X}}
\newcommand{\mY}{\ma{Y}}
\newcommand{\mA}{\ma{A}}
\newcommand{\mB}{\ma{B}}
\newcommand{\mV}{\ma{V}}
\newcommand{\mVt}{\tilde{\ma{V}}}
\newcommand{\mH}{\ma{H}}
\newcommand{\mC}{\ma{C}}
\newcommand{\vt}{\tilde{v}}
\newcommand{\ul}{(\text{u})}
\newcommand{\dl}{(\text{d})}
\newcommand{\iter}{\text{iter}}
\newcommand{\cosl}[1]{}
\newcommand{\resl}[1]{}
\newcommand{\fnql}[1]{}
\newcommand{\fnsv}[1]{}
\begin{document}
%
\title{Dynamic Uplink/Downlink Resource Management in Flexible Duplex-Enabled Wireless Networks}
\author{
\IEEEauthorblockN{Qi Liao}
\IEEEauthorblockA{Nokia Bell Labs, Stuttgart, Germany\\ 
Email: \url{qi.liao@nokia-bell-labs.com}}
}

\maketitle
\begin{abstract}
Flexible duplex is proposed to adapt to the channel and traffic asymmetry for future wireless networks \cite{ngmn_whitepaper}. In this paper, we propose two novel algorithms within the flexible duplex framework for joint uplink and downlink resource allocation in multi-cell scenario, named \ac{SAFP} and \ac{RMDI}, based on the awareness of interference coupling among wireless links. Numerical results show significant performance gain over the baseline system with fixed uplink/downlink resource configuration, and over the dynamic \ac{TDD} scheme that independently adapts the configuration to time-varying traffic volume in each cell. The proposed algorithms achieve two-fold increase when compared with the baseline scheme, measured by the worst-case quality of service satisfaction level, under a low level of traffic asymmetry. The gain is more significant when the traffic is highly asymmetric, as it achieves three-fold increase.  
\end{abstract}

\section{Introduction}\label{sec:intro}
Flexible duplex is one of the key technologies in \ac{5G} to optimize the resource utilization depending on  traffic demand \cite{ngmn_whitepaper}. The main objective is to adapt to asymmetric \ac{UL} and \ac{DL} traffic with flexible resource allocation in the joint time-frequency domain, such that the distinction between \ac{TDD} and \ac{FDD} is blurred, or completely removed.

Despite the advantage of adaptation to the dynamic traffic asymmetry, the drawback is the newly introduced \ac{ICI} between duplexing mode \ac{DL} and \ac{UL}, hereinafter referred as {\it \ac{IMI}}. The \ac{DL}-to-\ac{UL} interference plays a more important role due to the large difference between \ac{DL} and \ac{UL} transmission power. Many works focus on physical layer design to overcome \ac{IMI}. In \cite{liu2015performance}, special kinds of radio frames with different ratio of \ac{UL}/\ac{DL} are introduced to \ac{FDD}, and heuristic approach is proposed to find the most suitable one solely based on the traffic volume. A few studies target the problem of dynamic \ac{UL}/\ac{DL} resource configuration. In  \cite{el2011optimized}, the authors formulate a utility maximization problem to minimize the per-user difference between \ac{UL} and \ac{DL} rates; while in \cite{el2012stable} the problem is formulated as a two-sided stable matching game to optimize the average utility per user. Both works consider a single cell system where \ac{IMI} does not play a role. However, in a multi-cell system the optimal \ac{UL}/\ac{DL} configuration depends not only on the traffic volume but also the interference coupling between all transmission links. Although very few studies provide solutions within the flexible duplex framework, similar problem exists in dynamic \ac{TDD}. A popular solution is the cell-cluster-specific \ac{UL}/\ac{DL} reconfiguration \cite{shen2012dynamic}, but how to coordinate the clusters for inter-cluster \ac{IMI} mitigation still remains a challenge. 

In this paper, we optimize \ac{UL}/\ac{DL} resource configuration in multi-cell scenario, by recasting max-min fairness problem into a fixed point framework. Such framework is widely used for power control \cite{Yates95a,nuzman2007contraction} and load estimation \cite{siomina2012analysis,cavalcante2016elementary} for \ac{UL} or \ac{DL} systems independently. Our previous work \cite{liao2016dynamic} exploits the framework to tackle the joint \ac{UL}/\ac{DL} resource allocation and power control problem within flexible duplex, assuming that \ac{ICI} is simply proportional to the load. This assumption, however, is valid only when each resource unit has the same chance to be allocated to \ac{UL} or \ac{DL}, which may result in high probability of generating \ac{IMI}. We improved the model in this paper. The main contribution is summarized in below.
\begin{itemize}
\item A new interference model is defined, which allows to prioritize the positions of the resources for \ac{UL} and \ac{DL} transmission, to reduce the probability of generating \ac{IMI}.
\item We propose a novel algorithm \ac{SAFP} to find algorithmic solution to optimize \ac{UL}/\ac{DL} resource configuration. Unlike the models in previous works \cite{Yates95a,siomina2012analysis,cavalcante2016elementary}, the new interference model is {\it nonlinear} and {\it nonmonotonic}. 
\item Further we enhance \ac{SAFP} to \ac{RMDI} by detecting sequentially the dominant interferer in the system, and muting the partial resource in neighboring cells to reduce \ac{ICI}. 
\item We compare \ac{SAFP} and \ac{RMDI} numerically with two conventional schemes: a) fixed \ac{UL}/\ac{DL} configuration, and b) dynamic \ac{TDD} that adapts \ac{UL}/\ac{DL} configuration solely based on traffic volume, and show a performance gain varying from two to three fold depending on the traffic asymmetry.  
\end{itemize}

The rest of the paper is organized as follows. In Section \ref{sec:model}, the system model is described together with the correspondent notation. The problem statement is given in Section \ref{sec:Prob}. The proposed algorithms \ac{SAFP} and \ac{RMDI} are introduced in Section \ref{sec:Algor_SAFP} and \ref{sec:Algor_RMDI}, respectively. Finally, in Section \ref{sec:Numeric}, the numerical results are presented.
\section{System Model}\label{sec:model}
In this paper, we use the following definitions. The nonnegative and positive orthant in $k$ dimensions are denoted by $\RN^k$ and $\RP^k$, respectively. Let $\vx\leq\vy$ denote the component-wise inequality between two vectors $\vx$ and $\vy$. Let $\diag(\vx)$ denote a diagonal matrix with the elements of $\vx$ on the main diagonal. For a function $\vf:\R^k\to\R^k$, $\vf^n$ denotes the $n$-fold composition so that $\vf^n=\vf\circ\vf^{n-1}$. The cardinality of set $\set{A}$ is denoted by $|\set{A}|$. The positive part of a real function is defined by $\left[f(x)\right]^+:=\max\left\{0, f(x)\right\}$. The notation that will be used in this paper is summarized in Table \ref{tab:notation}.
\begin{table}[t]
\centering
\centering
\caption{NOTATION SUMMARY}
\begin{tabular}{|p{0.17\linewidth}|p{0.75\linewidth}|}
\hline
$\Ns$ & set of BSs with $|\Ns| = N$\\
$\Ks$ & set of UEs with $|\Ks| = K$\\
$\Ss$ & set of services with $|\Ss| = S$\\
$\Ws$ & set of MRUs with $|\Ws| = W$ \\ 
$\Ss^{\ul}$ ($\Ss^{\dl})$ & set of \ac{UL} (\ac{DL}) services\\
$\Ss_n$ & set of services served by the $n$th BS\\
$n_s$ & index of BS serving the $s$th service\\
$\mA$ & UE-to-service association matrix\\
$\mB$ & BS-to-service association matrix\\
$\mB^{\ul}$ ($\mB^{\dl}$) & BS-to-\ac{UL} (BS-to-\ac{DL}) association matrix\\
$\delta_t$($\delta_f$) & length of time duration (range of frequency) of an MRU\\
$W_t$($W_f$) & number of smallest time (frequency) units in MRU set $\Ws$\\
$\vw$ & fraction of resource allocated to services\\
$\vnu$ & cell load\\
$\vnu^{\ul}(\vnu^{\dl})$ & cell load in \ac{UL} (\ac{DL})\\
$\vp$ & transmit power allocated to services\\
$\vd$ & traffic demand of services\\
$\mH$ & channel gain matrix\\
$\mV$ & link gain coupling matrix \\
$\rho_s$ & per service QoS satisfaction level\\
$\rho$ & worst-case QoS satisfaction level\\
\hline
\end{tabular}
\label{tab:notation}
\vspace{-2em}
\end{table}

We consider an \ac{OFDM}-based wireless network system, consisting of a set of \acp{BS} $\Ns:=\{n: n=1, 2, \ldots, N\}$ and a set of \acp{UE} $\Ks:=\{k: k= 1,2, \ldots, K\}$. We assume that the network enables flexible duplex, where the resource in both frequency and time domains can be dynamically assigned to \ac{UL} and \ac{DL}. We define {\it \ac{MRU}} as the smallest time-frequency unit,  that has a length of $\delta_t$ seconds in time domain and a range of $\delta_f$ Hz in frequency domain. We consider a set of \acp{MRU}, denoted by $\Ws$,  consisting of $W_t$ smallest time units and $W_f$ smallest frequency units, and we have $W:=|\Ws| = W_t\cdot W_f$.

We assume that $K$ \acp{UE} generate a set of \ac{UL} and \ac{DL} services $\Ss:=\Ss^{\ul}\cup\Ss^{\dl}$ within the time duration of $W$ \acp{MRU} (i.e., $W_t\delta_t$ seconds).
Let the \ac{UE}-to-service association matrix be denoted by $\mA\in\{0,1\}^{K\times S}$, where $a_{k,s} = 1$ means that the $s$th service is generated by the $k$th \ac{UE}, and $0$ otherwise. 
Let $\mB\in\{0,1\}^{N\times S}$ denote the \ac{BS}-to-service association matrix. To differentiate \ac{UL} and \ac{DL} services, we further define \ac{BS}-to-\ac{UL} and \ac{BS}-to-\ac{DL} association matrices, denoted by $\mB^{\ul}\in\{0,1\}^{N\times S}$ and $\mB^{\dl}\in\{0,1\}^{N\times S}$, respectively.
Let the set of services served by \ac{BS} $n$ be denoted by $\Ss_n$ and let the \ac{BS} associated with service $s$ be denoted by $n_s$.

Let $\vw:=[w_1, \ldots, w_S]^T\in [0,1]^S$ be a vector collecting the fraction of resource allocated to all services $s\in\Ss$. The {\it cell load}, defined as the fraction of occupied resource within a cell, is denoted by $\vnu = \mB\vw\in[0,1]^{N}$. The cell load in \ac{UL} and \ac{DL} are denoted by $\vnu^{\ul} = \mB^{\ul}\vw$ and $\vnu^{\dl} = \mB^{\dl}\vw$ respectively, and we have $\vnu = \vnu^{\ul} + \vnu^{\dl}$. 
We collect the transmit power (in Watt) allocated to all services in a vector $\vp := [p_1, \ldots, p_S]^T$. 
\subsection{Link Gain Coupling Matrix}\label{subsec:LinkGain}
We assume that average channel gains over $W$ \acp{MRU} from each \ac{TX} to each \ac{RX} are known, collected in $\mH := (h_{i,j})\in\RP^{(N+K)\times (N+K)}$. Note that the \acp{TX} and \acp{RX} include both \acp{UE} and \acp{BS}. 
Let $v_{l,s}$ denote the channel gain of the link between the \ac{TX} of link $l$ and the \ac{RX} of link $s$. If $l=s$, $v_{l,s}$ is the channel gain of link $s$, otherwise if $l\neq s$, $v_{l,s}$ is the channel gain of the interference link caused by service $l$ to $s$.
We define {\it link gain coupling matrix} $\mVt$ as  
\begin{equation}
\mVt:=(\vt_{l,s})\in\RN^{S\times S}, \mbox{ with } \vt_{l,s}:=v_{l,s}/v_{s,s},
\label{eqn:LinkGainV}
\vspace{-0.5ex}
\end{equation}
where $\vt_{l,s}$ is the ratio between the interference link gain from service $l$ to service $s$ and the serving link gain of $s$. 

An example is shown in Fig. \ref{fig:Coupling}, where we consider a system enabling downlink and uplink decoupling in \ac{5G} \cite{elshaer2014downlink}. The interference caused by \ac{UL} service $3$ (link $l_3$) to \ac{DL} service $1$ (link $l_1$) has a link gain of $v_{3,1} = h_{3, 4}$, i.e., the link gain between \ac{TX} $3$ (transmitter of $l_3$) and \ac{RX} $4$ (receiver of $l_1$). Given that the channel gain of $l_1$ is $h_{2,4}$, the interference coupling ratio is given by $\vt_{3,1} = h_{3,4}/h_{2,4}$. 
%
	
\begin{Rem}[Incorporating different interference conditions]
Without loss of generality, we can modify $\mVt$ to take into account different interference conditions. For example, to allow self-interference cancellation we can define $\vt_{s,s} := 0$ for every $s\in\Ss$,  while to allow zero intra-cell interference we have $\vt_{l,s} := 0$ if $l$ and $s$ are associated with the same BS.
\label{rem:InterCond}
\vspace{-2ex}
\end{Rem}

\begin{figure}[t]
\centering
\includegraphics[width=.63\columnwidth]{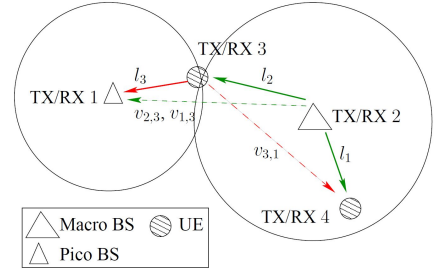}
\vspace{-2ex}
\caption{Example: Interference link gain.}
\label{fig:Coupling}
\vspace{-3ex}
\end{figure}

\subsection{Quality of Service Metric}\label{subsec:QoS}
In \cite{liao2016dynamic} we assume that the probability that $l$ causes \ac{ICI} to $s$ associated with a different \ac{BS} is approximated by the fraction of its allocated resource $w_l$, which leads to
\begin{equation}
\Pr\left\{l \mbox{ interferes } s| n_l\neq n_s\right\}\approx w_l \mbox{ for } l,s \in\Ss.
\label{eqn:InterApprox_1}
\end{equation}
The average \ac{SINR}\footnote{Note that  $\vt_{l,s}$ is computed with average channel gain over $W$ \acp{MRU}. Thus, \eqref{eqn:SINR_old} is the ratio between average received signal strength and average received interference, rather than the actual average \ac{SINR}. Since we do not assume to know the distribution of the channel gain, here we use \eqref{eqn:SINR_old} to approximate the average \ac{SINR}.}  of $s\in\Ss$ is approximated by
%
\begin{equation}
\sinr_s  \approx \dfrac{p_s}{\sum\limits_{l\in\Ss} \vt_{l,s} p_l w_l + \dfrac{\sigma_s^2}{v_{s,s}}}= \dfrac{p_s}{\left[\mVt^T\diag(\vw)\vp + \tilde{\ve{\sigma}}\right]_s},
\label{eqn:SINR_old}
\end{equation}
where $\tilde{\ve{\sigma}} := \left[\sigma_1^2/v_{1,1}, \sigma_2^2/v_{2,2}, \ldots, \sigma_S^2/v_{S,S}\right]^T$,  $\sigma_s^2$ denotes the noise power in the receiver of $s$. Note that in \eqref{eqn:SINR_old} $w_l$ serves as a probability. The interference condition is taken into account in $\vt_{l,s}$ as illustrated in Remark \ref{rem:InterCond}. 
However, the approximations \eqref{eqn:InterApprox_1} and \eqref{eqn:SINR_old} are only valid under the assumption that each \ac{MRU} is considered to be \lq\lq equal\rq\rq \ for all the services to be allocated, namely, the position of resource is not specified for \ac{UL} or \ac{DL}. Unfortunately, such assumption results in a high probability of \ac{IMI}. In the following we introduce an improved \ac{SINR} model based on a simple \ac{UL}/\ac{DL} resource positioning strategy to reduce \ac{IMI}. 

Recall that conventional \ac{TDD} or \ac{FDD} specifies a set of resource for \ac{UL} and \ac{DL} respectively to prevent \ac{IMI}. With flexible duplex, the challenge is to allow different resource partitioning between \ac{UL} and \ac{DL} in each cell, while limiting the probability of generating \ac{IMI}. Let us take an example, cell $m$ with \ac{UL} load $\nu_m^{\ul}$ and cell $n$ with \ac{DL} load $\nu_n^{\dl}$ share same set of available resource. It is obvious that the minimum overlapping area between \ac{UL} resource in cell $m$ and \ac{DL} resource in cell $n$ is $\left[\nu_m^{\ul} + \nu_n^{\dl} -1\right]^+$, which can be easily achieved by allocating the set of resource to \ac{UL} traffic in cell $m$ in some priority order while allocating the same set of resource to \ac{DL} traffic in cell $n$ in reverse order. 

%

Given the aforementioned strategy, to derive the interference coupling matrix that incorporates the probability that a link causes \ac{ICI} to another, we introduce a {\it reuse factor coupling matrix} $\mC(\vw)$ depending on $\vw$. Let $x_s\in\{\text{u},\text{d}\}$ denote the \ac{UL} or \ac{DL} traffic type of service $s\in\Ss$, and recall that $n_s$ denotes the serving \ac{BS} of $s$, $\mC(\vw)$ is defined as  
\begin{align}
\vspace{-0.3ex}
\mC(\vw) & :=\mC:= \left(c_{l,s}\right) \in\RN^{S\times S}, \label{eqn:MatrixC_0} \\
c_{l,s} & := 
\begin{cases}
 \left[ \left(\nu_{n_l}^{(x_l)} + \nu_{n_s}^{(x_s)}-1\right)/\nu_{n_s}^{(x_s)}\right]^+ & \mbox{  if } x_l \neq x_s \\
 \min\left\{1, \nu_{n_l}^{(x_l)}/\nu_{n_s}^{(x_s)}\right\} & \mbox{ if } x_l = x_s,\nonumber
\end{cases}
\label{eqn:MatrixC}
\vspace{-0.3ex}
\end{align} 
where the load of cell $n_s$ occupied by traffic type $x_s$ is computed by $\nu_{n_s}^{(x_s)}:=\left[\mB^{(x_s)}\vw\right]_{n_s}$. In general, $c_{l,s}$ is defined as the ratio of the overlapping area on the resource plane between the load of cell $n_l$ serving traffic type $x_l$ and the load of cell $n_s$ serving traffic type $x_s$ to the load of cell $n_s$ serving traffic type $x_l$.

With $\mC(\vw)$ in hand, given the power vector $\vp$, we can modify \eqref{eqn:SINR_old} and derive the \ac{SINR} of service $s\in\Ss$ as  
\begin{equation}
\vspace{-0.3ex}
\sinr_s(\vw)  \approx \dfrac{p_s}{\left[\left(\mC(\vw)\circ\mVt\right)^T\diag(\vp)\vw + \tilde{\ve{\sigma}}\right]_s},
\label{eqn:SINR_new}
\vspace{-0.3ex}
\end{equation}
where with a slight abuse of notation, $\mX \circ \mY$ denotes the Hadamard (entrywise) product of matrices $\mX$ and $\mY$. Note that the first term in the denominator is the interference power received by service $s$ divided by the channel gain of $s$, and it is equivalent to $\sum_l c_{l,s}w_l v_{l,s}p_l/v_{s,s}$, where $c_{l,s} \cdot w_l$ approximates the probability that service $l$ causes interference to service $s$.

The maximum achievable number of bits for service $s\in\Ss$ within the time span of resource set $\Ws$ is  
\begin{equation}
\vspace{-0.3ex}
\eta_s(\vw) = \delta_t\delta_f W w_s \log \left(1 + \sinr_s(\vw)\right),
\label{eqn:throughput}
\vspace{-0.3ex}
\end{equation}
where the unit of $\delta_t\delta_f$ is Hz$\cdot$s/\ac{MRU}, while $W w_s$ is the number of \acp{MRU} allocated to $s$. 

Assuming that the nonzero traffic demands $\vd:=(d_1, \ldots, d_S)^T\in\RP^S$ is known, where $d_s$ is defined as number of required bits of $s$ during the time span of $\Ws$, we introduce {\it per service \ac{QoS} satisfaction level},  written as 
\begin{equation}
\rho_s(\vw) = \eta_s(\vw)/d_s, \ s\in\Ss.
\label{eqn:QoS}
\end{equation}

\section{Problem Formulation}\label{sec:Prob}
The objective is to partition the resource set $\Ws$ in each cell $n\in\Ns$ into three subsets: resource for  \ac{UL}, resource for \ac{DL}, and blanked resource\footnote{Under certain conditions,  enhanced interference mitigation can be achieved by muting partial resources in some cells. However, it is also possible that the optimal solution returns an empty set of the blanked resource.}, respectively, to maximize {\it the worst-case \ac{QoS} satisfaction level}, defined as
\begin{equation}
\rho(\vw):= \min_{s\in\Ss}\rho_s(\vw).
\label{eqn:rho}
\end{equation}
All demands of the services are feasible, when $\rho(\vw)\geq 1$.


We formulate the problem in Problem \ref{prob:maxminQoS}, where \eqref{eqn:problem_1a} and \eqref{eqn:problem_1b} imply the objective of maximizing the worst-case \ac{QoS} satisfaction level $\rho^{\ast}$, and \eqref{eqn:loadconstraints} is the per-cell load constraint.



\begin{problem} 
\begin{subequations}
\label{eqn:prob_1}
\begin{align}
\maximize_{\vw\in\RN^S, \rho\in\RN} \ & \rho \label{eqn:problem_1a}\\
\mbox{s.t. } & \vw  \geq \rho \vf(\vw), \label{eqn:problem_1b}\\
 & g(\vw) :=\|\mB\vw\|_{\infty} \leq 1,\label{eqn:loadconstraints}
\end{align}
\end{subequations}
where the vector-valued function $\vf$ is defined by
\begin{subequations}
\label{eqn:define_vecf}
\begin{align}
 \vf: \RN^{S}\to \RP^{S}:  & \vw\mapsto\left[f_1(\vw), \ldots, f_S(\vw)\right]^T, \label{eqn:define_f}\\
 \mbox{where }   f_s(\vw) := & \frac{d_s}{\delta_t\delta_f W \log\left(1 + \sinr_s(\vw)\right)}. \label{eqn:define_fs}
\end{align}
\end{subequations}
\label{prob:maxminQoS}
\end{problem}

In \cite{liao2016dynamic}, we show that with conventional model of \ac{SINR} \eqref{eqn:SINR_old}, Problem \ref{prob:maxminQoS} is equivalent to solve a nonlinear system of equations such that $\vw = \rho \vf(\vw)$, $g(\vw) = 1$ and that $\rho$ is maximized. It is worth mentioning that, with the modified models of interference coupling \eqref{eqn:MatrixC_0} and \ac{SINR} \eqref{eqn:SINR_new}, Problem \ref{prob:maxminQoS} is a multi-variate nonconvex optimization problem. Moreover, the constraint \eqref{eqn:problem_1b} is neither convex nor continuously differentiable, and Problem \ref{prob:maxminQoS} is not necessarily equivalent to the nonlinear system of equations.

%
In Section \ref{sec:Algor_SAFP} we provide algorithmic solution to Problem \ref{prob:maxminQoS}, denoted by $\vw^{\ast}$. The per-cell fraction of resource to allocated to \ac{UL} and \ac{DL} are then obtained as $\vnu^{\ul,\ast}=\mB^{\ul}\vw^{\ast}$ and $\vnu^{\dl, \ast}=\mB^{\dl}\vw^{\ast}$, respectively. If $\rho^{\ast}:=\rho(\vw^{\ast})\geq 1$, all demands are feasible. 
However, if $\rho^{\ast}<1$, the solution to Problem \ref{prob:maxminQoS} is not a good operating point, since the demands of all services are infeasible. In other words, all users are unsatisfied. Therefore, a further question arises: {\it how can we transform the desired demands in Problem \ref{prob:maxminQoS} from infeasible to feasible?} 
One of the factors causing infeasible demand is the bottleneck services.  
In Section \ref{sec:Algor_RMDI} we modify Problem \ref{prob:maxminQoS} by dedicating partial resources for bottleneck services, while muting them for others, and develop an algorithm with heuristic strategies. 
%
%
\begin{Rem}[New challenge due to complex interference coupling]
Problem \ref{prob:maxminQoS} is formulated along similar lines to our previous work \cite[Problem 2a]{liao2016dynamic}. However, in \cite{liao2016dynamic}, the received interference in \ac{SINR} \eqref{eqn:SINR_old} is an affine function of $\vw$, which further leads to some nice properties of $\vf$ (as shown in Lemma \ref{lem:f_SIF}). In this paper, because we introduce more complex interference coupling \eqref{eqn:MatrixC_0} and the resulting modified \ac{SINR} model \eqref{eqn:SINR_new}, the desired properties of $\vf$ do not exist, which brings new challenge with developing efficient algorithmic solution.
\end{Rem}
%
\section{Successive Approximation of Fixed Point}\label{sec:Algor_SAFP}
In this section, we first provide background information about the mathematical tool to solve the problem. Then, we propose a novel efficient algorithm \ac{SAFP} to find a feasible point of $\vw$ with good, if not optimal, objective value of $\rho^{\ast}$.
%
\subsection{Background Information and Previous Results}\label{subsec:background}
With the conventional \ac{SINR} model in \eqref{eqn:SINR_old}, $\vf$ defined in \eqref{eqn:define_vecf} has the following property.
  
\begin{lemma}[{\cite[Lemma 1]{liao2016dynamic}}]
With \ac{SINR} defined in \eqref{eqn:SINR_old}, $\vf:\RN^{S}\to\RP^{S}$ is a \ac{SIF} (see Appendix \ref{sec:app_Background} for definition).
\label{lem:f_SIF}
\end{lemma}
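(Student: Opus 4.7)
The plan is to verify that $\vf$ satisfies the three defining properties of a \ac{SIF}: positivity, monotonicity, and scalability, on $\RN^S$. The crucial structural fact under the old \ac{SINR} model \eqref{eqn:SINR_old} is that the interference-plus-noise denominator is affine in $\vw$: writing $I_s(\vw):=\sum_{l\in\Ss}\vt_{l,s}p_l w_l$ and $N_s:=\sigma_s^2/v_{s,s}>0$, we have $\sinr_s(\vw) = p_s/(I_s(\vw)+N_s)$ with $I_s$ linear in $\vw$ and nonnegative because the coefficients $\vt_{l,s}p_l$ are nonnegative.

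First I would dispense with positivity and monotonicity. Because $p_s>0$ and $I_s(\vw)+N_s>0$ on $\RN^S$, we get $\sinr_s(\vw)>0$ and hence $\log(1+\sinr_s(\vw))>0$, so $f_s(\vw)>0$. For monotonicity, if $\vw\leq\vw'$ then $I_s(\vw)\leq I_s(\vw')$ by nonnegativity of the coefficients; hence $\sinr_s(\vw)\geq\sinr_s(\vw')$, and therefore $f_s(\vw)\leq f_s(\vw')$ component-wise.

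The main step is scalability: for every $\alpha>1$ I need $\alpha \vf(\vw) > \vf(\alpha\vw)$ component-wise. Since $N_s>0$ and $I_s$ is linear, $I_s(\alpha\vw)+N_s = \alpha I_s(\vw)+N_s < \alpha(I_s(\vw)+N_s)$, which gives the strict bound $\sinr_s(\alpha\vw) > \sinr_s(\vw)/\alpha$. The inequality $\alpha f_s(\vw)>f_s(\alpha\vw)$ rewrites, after taking the reciprocal and exponentiating, as $(1+\sinr_s(\alpha\vw))^\alpha > 1+\sinr_s(\vw)$. I would close this by combining the above strict bound with Bernoulli's inequality applied to $a:=1+\sinr_s(\vw)/\alpha>1$: $a^\alpha \geq 1+\alpha(a-1) = 1+\sinr_s(\vw)$, hence $(1+\sinr_s(\alpha\vw))^\alpha > a^\alpha \geq 1+\sinr_s(\vw)$, as required.

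The only mildly subtle point is scalability, and it hinges essentially on the strict positivity of the noise term $N_s$; were it zero, only the nonstrict inequality would survive. Everything else reduces cleanly to the affine form of the interference-plus-noise term. This affinity is exactly what the modified \ac{SINR} \eqref{eqn:SINR_new} destroys via the reuse-factor matrix $\mC(\vw)$, which is why, as anticipated in the paper, the same Yates-style argument cannot be reused for the main problem and a new algorithmic approach is needed in Section \ref{sec:Algor_SAFP}.
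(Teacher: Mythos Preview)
Your proof is correct. Note, however, that the paper does not actually prove this lemma: it is simply imported verbatim from \cite[Lemma~1]{liao2016dynamic}, so there is no in-paper argument to compare against. Your direct verification of the \ac{SIF} axioms is the standard Yates-style route and would serve as a self-contained substitute for that citation. The only minor remark is that the paper's Definition~\ref{def:SIF} explicitly drops positivity (it follows from monotonicity and scalability, per \cite{leung2004convergence}), so your positivity step, while true and needed to justify the codomain $\RP^S$, is technically not part of the \ac{SIF} check as the paper frames it. Your scalability argument via the strict bound $\sinr_s(\alpha\vw)>\sinr_s(\vw)/\alpha$ (from $N_s>0$) combined with Bernoulli's inequality is clean and makes precise exactly the point the paper alludes to in Remark~2: the affine dependence of the interference-plus-noise term on $\vw$ is what drives the \ac{SIF} property, and this is lost once $\mC(\vw)$ enters in \eqref{eqn:SINR_new}.
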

%
Knowing that $\vf$ is \ac{SIF}, and that $g:\RP^S\to\RP$ in \eqref{eqn:loadconstraints} is a monotonic norm,  we encounter the same type of problem as \cite[Problem 2a]{liao2016dynamic}. The following proposition is provided based on the previous result \cite[Theorem 1]{liao2016dynamic}, which gives rise to an algorithmic solution to Problem \ref{prob:maxminQoS} with conventional \ac{SINR} model based on the fixed point iteration scheme.
\begin{proposition}
Suppose \ac{SINR} is modeled with \eqref{eqn:SINR_old}, and 
\begin{itemize}
\item $\vf:\RN^S\to\RP^S$ is \ac{SIF}, 
\item $g:\RP^S\to \RP$ is monotonic, and homogeneous with degree $1$ (i.e., $g(\alpha \vx) = \alpha g(\vx)$ for all $\alpha >0$)
\end{itemize}
There exists a unique solution to Problem \ref{prob:maxminQoS}, denoted by $\{\vw^{\ast}, \rho^{\ast}\}$, where $\vw^{\ast}$ can be obtained by performing the following fixed point iteration:
\begin{equation}
\vw^{(t+1)} = \frac{\vf\left(\vw^{(t)}\right)}{g\circ\vf\left(\vw^{(t)}\right)}, t\in\NN,
\label{eqn:fixedpoint_w}
\end{equation}
where with a slight abuse of notation, $g\circ\vf$ denotes the composition of functions $g$ and $\vf$. The iteration in \eqref{eqn:fixedpoint_w} converges to $\vw^{\ast}$, and we have $\rho^{\ast} = 1/g\circ\vf(\vw^{\ast})$ and $g(\vw^{\ast})=1$.
\label{prop:solutionFixedC}
\end{proposition}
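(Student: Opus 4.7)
The plan is to verify the hypotheses of \cite[Theorem 1]{liao2016dynamic} and then apply it directly. Lemma \ref{lem:f_SIF} already gives that $\vf$ is a \ac{SIF} under the conventional \ac{SINR} model. For the load function $g(\vw) = \|\mB\vw\|_{\infty}$, monotonicity on $\RN^S$ follows from $\mB\in\{0,1\}^{N\times S}$ together with the monotonicity of the $\ell_\infty$-norm restricted to nonnegative vectors; homogeneity of degree $1$ is inherited from the norm and the linearity of $\vw\mapsto\mB\vw$. Hence both structural hypotheses of the cited theorem are in force.

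The conceptual core of the argument reduces Problem \ref{prob:maxminQoS} to the fixed point equation \eqref{eqn:fixedpoint_w} by showing that both constraints \eqref{eqn:problem_1b} and \eqref{eqn:loadconstraints} are active at any optimum $(\vw^{\ast}, \rho^{\ast})$. First, replacing $\vw^{\ast}$ by $\tilde{\vw} := \rho^{\ast}\vf(\vw^{\ast}) \leq \vw^{\ast}$ preserves feasibility of both \eqref{eqn:problem_1b} and \eqref{eqn:loadconstraints} (using monotonicity of $\vf$ and $g$), so one may assume $\vw^{\ast} = \rho^{\ast}\vf(\vw^{\ast})$. Next, if $g(\vw^{\ast})<1$ were to hold, rescaling by $\alpha := 1/g(\vw^{\ast}) > 1$ would tighten the load constraint, and the scalability property of a \ac{SIF} then yields $\rho^{\ast}\vf(\alpha\vw^{\ast}) < \alpha\rho^{\ast}\vf(\vw^{\ast}) = \alpha\vw^{\ast}$, leaving strict slack in \eqref{eqn:problem_1b} and thus room to enlarge $\rho^{\ast}$, contradicting optimality. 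Therefore $\vw^{\ast} = \rho^{\ast}\vf(\vw^{\ast})$ and $g(\vw^{\ast}) = 1$; applying $1$-homogeneity of $g$ to the first identity gives $\rho^{\ast} = 1/(g\circ\vf)(\vw^{\ast})$ and hence $\vw^{\ast} = \vf(\vw^{\ast})/(g\circ\vf)(\vw^{\ast})$, i.e., $\vw^{\ast}$ is a fixed point of the map $T(\vw) := \vf(\vw)/(g\circ\vf)(\vw)$ that drives the iteration \eqref{eqn:fixedpoint_w}.

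Existence, uniqueness, and convergence of \eqref{eqn:fixedpoint_w} then follow from standard \ac{SIF} arguments. The operator $T$ is well-defined because positivity of $\vf$ together with monotonicity and nonnegativity of $g$ ensures $(g\circ\vf)(\vw) > 0$, and by construction $g(T(\vw)) = 1$ for every admissible $\vw$, so the iterates are confined to the level set $\{\vw\in\RN^S : g(\vw) = 1\}$. The monotonicity and scalability of $\vf$ then translate into a pseudo-contractive behavior of $T$, from which a unique fixed point $\vw^{\ast}$ follows by Yates-type arguments, and $\vw^{(t)} \to \vw^{\ast}$ from any positive initialization.

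The main obstacle I anticipate is handling the normalized operator $T$ rather than $\vf$ itself: normalization by $g\circ\vf$ breaks the direct applicability of Yates' classical fixed-point theorem, and one must show that the uniqueness and global-convergence properties carry over to this projected iteration constrained to the surface $g(\vw)=1$. This is precisely the non-routine step that the proof of \cite[Theorem 1]{liao2016dynamic} is designed to supply, so invoking it closes the argument.
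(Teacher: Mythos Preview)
Your proposal is correct and takes essentially the same approach as the paper: both rely on \cite[Theorem~1]{liao2016dynamic} after checking that $\vf$ is a \ac{SIF} (Lemma~\ref{lem:f_SIF}) and that $g$ is monotone and $1$-homogeneous. The paper in fact omits the argument entirely and refers to \cite[Theorem~1 and Proposition~1]{liao2016dynamic}, so your sketch of why the constraints are active at the optimum and why the normalized iteration inherits the \ac{SIF} fixed-point guarantees simply spells out what the paper delegates to the cited reference.
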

\begin{proof}
The proof is omitted here since it uses our previous result \cite[Theorem 1]{liao2016dynamic} and is along the same lines as \cite[Proposition 1]{liao2016dynamic}.  
\end{proof}

\subsection{Successive Approximation of Fixed Point}\label{subsec:AlterOpt}
Proposition \ref{prop:solutionFixedC} provides an algorithmic solution to Problem \ref{prob:maxminQoS} with \ac{SINR} \eqref{eqn:SINR_old}, by utilizing the properties of \ac{SIF}.  Unfortunately, with the modified \ac{SINR} in \eqref{eqn:SINR_new}, $\vf$ is not \ac{SIF} because the coupling matrix $\mC(\vw)$ depends on $\vw$ in a non-monotonic and non-differentiable manner.  However, it is easy to show that by replacing $\mC(\vw)$ in \eqref{eqn:SINR_new} with some approximation $\mC':=\mC(\vw')$ computed with fixed $\vw'$,  the \ac{SINR} in \eqref{eqn:SINR_new} falls into the same class as \eqref{eqn:SINR_old}, and the approximated problem can be solved by Proposition \ref{prop:solutionFixedC} with $\vf(\vw)$ replaced by $\vf_{\mC'}(\vw):=\vf(\vw, \mC(\vw'))$.

Therefore, our essential, natural idea is to efficiently compute a suboptimal solution of Problem \ref{prob:maxminQoS} by solving a sequence of (simpler) max-min fairness subproblems whereby the noncontractive mapping $\vf$ is replaced by suitable contraction approximation $\vf_{\mC'}$. These subproblems can be solved with Proposition \ref{prop:solutionFixedC}.

More specifically, the proposed \ac{SAFP} algorithm consists in solving a sequence of approximations of Problem \ref{prob:maxminQoS} in the form 
\begin{equation}
\maximize\limits_{\vw\in\RN^S, \rho\in\RN}  \rho;
\mbox{ s.t. } \vw\geq \rho \vf_{\mC'}\left(\vw\right); \ g(\vw)\leq 1,
\label{eqn:problem_fixedC}
\vspace{-0.7ex}
\end{equation}
where $\vf_{\mC'}(\vw)$ represents approximation of $\vf(\vw)$ at the current iterate $\vw'$. The unique solution to \eqref{eqn:problem_fixedC} can be obtained by the fixed point iteration \eqref{eqn:fixedpoint_w}, with $\mC(\vw)$ replaced by $\mC(\vw')$. 

Unfortunately, due to the complexity of $\mC(\vw)$, the convergence of \ac{SAFP} to a limit point cannot be guaranteed, since multiple fixed points can exist in the system where the inequality sign in \eqref{eqn:problem_1b} is replaced by the equality sign. Different initial values of $\hat{\vw}$ may lead to different fixed points. {\it Moreover, the solution to the system of nonlinear equations may not be the optimal solution to the original problem of maximizing the minimum, due to the nonmonotonicity of the mapping $\vf$ when including $\mC$ into the interference model.} 
Thus, we design the searching algorithm to guarantee the utility increase with initial values of $\{\rho^{\ast}, \vw^{\ast}\}$, maximum number of random initiation $N_{\text{max}}$, and algorithm stopping criterion depending on the maximum number of iterations $N_{\iter}$  and the distance threshold $\epsilon$, illustrated as below.
\begin{itemize}
\item The algorithm runs for $N_{\max}$ times, each with a different random initialization of $\hat{\vw}$ and the corresponding $\mC(\hat{\vw})$. 
\item For each initialization $\hat{\vw}_n$, $n= 1, 2, \ldots, N_{\max}$, we iteratively perform the fixed point iteration in \eqref{eqn:fixedpoint_w} with $\vf(\vw)$ replaced by $\vf_{\hat{\mC}_n}(\vw)$ where $\hat{\mC}_n := \mC(\hat{\vw}_n)$. The iteration stops if the number of iterations exceeds $N_{\iter}$ or the distance yields $\|\vw'-\vw\|\leq\epsilon$ and returns the solution $\{\vw', \rho'\}$ with respect to the $n$th random initialization. The solution is updated with $\vw^{\ast}\leftarrow\vw'$, $\rho^{\ast}\leftarrow\rho'$ if $\rho'>\rho^{\ast}$. 
\end{itemize}
The proposed \ac{SAFP} algorithm is summarized in Algorithm \ref{algo:AlgorithmUpdateW}.

Although the convergence of \ac{SAFP} to a global optimum cannot be guaranteed and heuristics are introduced, numerical results in Section \ref{sec:Numeric} (e.g., Fig. \ref{fig:SAFP_initialization}) show that each random initialization converges to a fixed point, and with limited number of initializations, the algorithm finds a suboptimal, if not optimal, solution among multiple fixed points.



\setlength{\textfloatsep}{0pt}
\begin{algorithm}[t]
\SetKwData{band}{$\vw'$}
\SetKwData{coupling}{$\mC'$}
\SetKwInOut{Input}{input}
\SetKwInOut{Output}{output}
\caption{\ac{SAFP} algorithm for resource partitioning}
\Input{$i\leftarrow 1$, $N_{\text{max}}>1$, $N_{\iter}>1$, $\epsilon>0$, $\rho^{\ast}\leftarrow 0$, $\vw^{\ast}\leftarrow \ve{0}$
}
\Output{$\{\vw^{\ast}, \rho^{\ast}\}$}
\BlankLine
\While{$i\leq N_{\text{max}}$}{
random initialization of $\vw'$; $\mC'\leftarrow\mC(\vw')$\;
$j \leftarrow 0$, $\vw\leftarrow\ve{0}$\;
$\Delta^{(j)} \leftarrow \|\vw'-\vw\|_{\infty}$; \ $\vw^{(j)}\leftarrow \vw'$\; 
 \While{$j\leq N_{\text{iter}}$ or $\Delta^{(j)}\geq\epsilon$}{
 \emph{\% solving approximated subproblem with $\mC'$}\;
     \While{$\|\vw'-\vw\|_{\infty}\geq\epsilon$}{
		 $\vw\leftarrow\vw'$\;
      \band $\leftarrow \vf_{\mC'}(\vw)/g\circ\vf_{\mC'}(\vw)$ \;
     }
		\emph{\% Update $\mC$ with optimized $\vw'$}\;
		$\vw^{(j+1)} \leftarrow \vw'$\;
			$\mC^{(j+1)} = $ \coupling $\leftarrow \mC(\vw')$ \;
			%
			$\Delta^{(j+1)} \leftarrow \|\vw^{(j+1)} -\vw^{(j)}\|_{\infty}$\;
			$j \leftarrow j+1$\;
			}
	  $\rho' = \rho'(\vw')\leftarrow \min_{s\in\Ss} w_s' / f_{\mC',s}(\vw')$\;
		\emph{\% update the solution if $\rho'$ exceeds the stored value}\;
		\If{$\rho'>\rho^{\ast}$}{
		$\rho^{\ast}\leftarrow \rho'$\;
		$\vw^{\ast}\leftarrow \vw'$\;
		}
		$i \leftarrow i+1$\;	
}
\label{algo:AlgorithmUpdateW}
\end{algorithm}


\section{Resource Muting for Dominant Interferer}\label{sec:Algor_RMDI}
The proposed \ac{SAFP} finds a feasible point of $\vw^{\ast}$ with suboptimal, if not optimal, objective value of $\rho^{\ast}$. If $\rho^{\ast}\geq 1$,  the obtained $\vw^{\ast}$ provides fairness on the services, and the demands of all services are feasible. However, if $\rho^{\ast}<1$,  $\vw^{\ast}$ is not a good operating point  since the traffic demands of all services are infeasible. Therefore, in this section we focus the following question: {\it how can we transform the desired demands in Problem \ref{prob:maxminQoS} from infeasible to feasible?}
In \cite{takahashi2006efficient}, the authors propose a removal selection criterion for an infeasible \ac{DL} power control problem, that removes sequentially the bottleneck services until the demands for all the remaining services are feasible. However, is there a method of further increasing $\rho^{\ast}$ without removal of services? Motivated by coordinated muting using \ac{ABS} for time domain intercell interference coordination introduced in \cite{3gpp_ts36133}, we are interested in exploring the tradeoff between resource utilization and interference reduction by introducing the resource muting in flexible duplex.  
\vspace{-0.5ex}
\subsection{Modified Load Constraints Incorporating Resource Muting}
The key concept is to sequentially reserve some resource in a cell for the dominant interferer, while muting them in the cells strongly impacted by the interferer. To this end, we rank the services based on the interference level that they generate to others, given by 
\vspace{-0.7ex}
\begin{equation}
I_s(\vw)  :=  \left(\vc_{s}'{\ve{\vt}_{s}'}^T\right) p_s w_s, \mbox{ for } s\in\Ss,
\label{eqn:InterferenceMetric}
\vspace{-0.5ex}
\end{equation}
where $\vc_{s}':=\row_s \mC(\vw)$ denotes the $s$th row of $\mC(\vw)$, and $\ve{\vt}_{s}':=\row_s \mVt$ denotes the $s$th row of $\mVt$. 

Moreover, to prevent the waste of resource, we select the strongly affected cells to mute their resource. The set of cells to mute the resource reserved for $s$ is selected by 
\begin{equation}
\vspace{-0.5ex}
\Ms_s  := \{m\in\Ns\setminus\{n_s\}: J_{s,m}(\vw)\geq \alpha  \},
\label{eqn:InterferenceCellMetic_0}
\end{equation}
where $\alpha$ is a threshold and $J_{s,m}(\vw)$ is the interference generated from service $s$ to a cell $m\neq n_s$, defined as
\begin{equation}
 J_{s,m}(\vw)  :=   \left[\mB\left(\vc_{s}'\circ\ve{\vt}_{s}'\right)^T\right]_m p_s w_s.
\label{eqn:InterferenceCellMetic}
\end{equation}

If a set of dominant interferers $\overline{\Ss}$ is chosen, and for each $s\in\overline{\Ss}$ a subset of the cells $\Ms_s$ is selected to mute resource $w_s$, then, in each cell we have the load constraint
\begin{equation}
g'_m(\vw) :=\sum_{s\in\overline{\Ss}} \indication{m\in\Ms_s} w_s  + \sum_{l\in\Ss_m} w_l \leq 1, \mbox{ for } m\in\Ns,
\label{eqn:PerCellLoadConstraint}
\vspace{-0.5ex}
\end{equation}
where $\indication{\cdot}$ is the indication function, the first term is the total amount of resource to be muted in cell $m$, and the second term is the amount of available resource for services in $m$. 
 
Since $g'_m(\vw)\leq 1$ needs to be held for every $m\in\Ns$, the load constraint can be rewritten as
\begin{equation}
g'(\vw) := \max_{m\in\Ns} g_m'(\vw)\leq 1.
\label{eqn:jointLoadConstraint}
\vspace{-0.5ex}
\end{equation} 

Note that without the muting scheme, i.e., if $\bar{\Ss} = \emptyset$, the first term in \eqref{eqn:PerCellLoadConstraint} is zero and \eqref{eqn:jointLoadConstraint} is equivalent to the per-cell load constraints in \eqref{eqn:loadconstraints}.
  %
\subsection{Design of Heuristic Algorithm}
It is obvious that the modified $g'$ is also monotonic and homogeneous with degree $1$, which enables leverage of Proposition \ref{prop:solutionFixedC} to solve the modified Problem \ref{prob:maxminQoS}, with $g(\vw)$ replaced by $g'(\vw)$ to incorporate the resource reservation and muting strategy.  

Compared to the solution to the original Problem \ref{prob:maxminQoS},  resource muting may not necessarily improve the desired utility $\rho$, because muting of $w_s$ in cell $m\in\Ms_s$ may lead to waste of resource. Therefore, we develop a heuristic algorithm \ac{RMDI} to guarantee a utility that is no less than the $\rho$ derived in Algorithm \ref{algo:AlgorithmUpdateW}. The Algorithm is described briefly in the following steps.
\begin{itemize}
\item[1.] Derive $\vw^{(0)} = \vw^{\ast}$ to Problem \ref{prob:maxminQoS} with Algorithm \ref{algo:AlgorithmUpdateW} and compute the corresponding $\rho^{(0)} = \rho^{\ast}$. 
\item[2.] Compute $I_s(\vw^{\ast})$ and rank the services based on $I_s$. Let $q_s$ denote the rank of $s$, e.g., the maximum interferer $\hat{s}:=\argmax_s I_s$ has a rank of $q_{\hat{s}} = 1$.  Set $k= 1$. 
\item[3.] Add the service with highest rank into $\bar{\Ss}^{(k)}$, e. g., $\bar{\Ss}^{(k)} = \{s: q_s\leq k\}$. 
\item[4.] Solve modified Problem \ref{prob:maxminQoS} with $\bar{\Ss}^{(k)}$ using Algorithm \ref{algo:AlgorithmUpdateW} (with $g$ replaced by $g'$), derive $\vw^{(k)}$ and $\rho^{(k)}$.
\item[5.] If $\rho^{(k)}\geq \rho^{(k-1)}$, increment $k$ and go back to Step 3; otherwise stop the algorithm.
\item[6.] Obtain solution $\vw^{\star} = \vw^{(k-1)}$.
\end{itemize}

\section{Numerical Results}\label{sec:Numeric}
In this section, we analyze the performance of the proposed algorithms \ac{SAFP} and \ac{RMDI}, by considering the asymmetry of \ac{UL} and \ac{DL} traffic in two-cell scenario. The distance between the two \acp{BS} is $2$ km. The transmit power of \ac{BS} and \ac{UE} are $43$ and $22$ dBm respectively and all the other simulation parameters mainly related to channel gain can be found in \cite[Tab. A2.1.1-2]{3GPP36814}. We define the minimum time unit $\delta_t$ as $0.5$ ms and the minimum frequency unit $\delta_f$ as $15$ kHz. Further we have $W_t = 20$ and $W_f = 300$, i. e., a resource plane that spans a time duration of $0.01$ seconds and frequency of $5$ MHz (including the guard band).  

We defined a fixed total traffic demand $\Lambda=\sum_s d_s = 50$ kbits within $W_t\delta_t = 0.01$ seconds, which implies a total serving data rate of $5$ Mbit/s. The total traffic can be asymmetrically distributed between the two cells with different ratios among $\Ts_{\text{inter}} := \{1/9, 2/8, 3/7, \ldots, 9/1, 10/0\}$.  Within each cell, the traffic can be asymmetrically distributed between \ac{UL} and \ac{DL} traffic  with ratios among $\Ts_{\text{intra}}:=\{1/9, 2/8, 3/7, \ldots, 9/1\}$. \acp{UE} with either \ac{UL} or \ac{DL} traffic are generated with uniform distribution within the intersection of two balls with radius $2$ km, and with \ac{BS} 1 and 2 as their centers respectively, to analyze the scenario of high inter-cell interference. Without loss of generality, we can place one \ac{UL} and one \ac{DL} service in each cell with the traffic demand computed by the traffic ratio mentioned above.

{\it 1) Algorithm convergence of \ac{SAFP}.} 
Let us first examine the convergence of Algorithm \ref{algo:AlgorithmUpdateW}, and compare it with Algorithm \lq\lq FP\rq\rq \ that is summarized in Proposition \ref{prop:solutionFixedC} with conventional \ac{SINR} model \eqref{eqn:SINR_old}. The parameters are set as $N_{\text{max}}=30$, $N_{\text{iter}} =1000$, $\epsilon = 10^{-4}$. In Fig. \ref{fig:SAFP_convergence} we show the convergence of the \ac{SAFP} with one particular initialization of $\vw'$ and $\mC\left(\vw'\right)$ and compare it with FP. The magenta circle indicates the starting point with an updated $\mC\left(\vw^{(j)}\right)$, and the green dashed line shows that with each fixed $\mC\left(\vw^{(j)}\right)$, by performing fixed point iteration, $\rho$ monotonically increases and converges to the fixed point with respect to $\mC\left(\vw^{(j)}\right)$. 
Note that the green dashed line is not the \lq\lq actual\rq\rq \ utility $\rho$, since it is computed with updated $\vw^{(i)}$ and the approximation $\mC\left(\vw^{(j-1)}\right)$. Therefore, we plot the red line to show the convergence of the actual utility at each step of updating $\mC$, computed with $\vw^{(j)}$ and $\mC\left(\vw^{(j)}\right)$. By comparing the red curve and the blue curve (convergence of FP algorithm), we observe a significant increase of utility $\rho$ by using \ac{SAFP}. This is because, comparing with FP that randomly places the \ac{UL} and \ac{DL} resource, \ac{SAFP} is based on an improved interference model, where \ac{ICI} only appears in the intersection of the sets of allocated \acp{MRU} between different cells. 
Fig. \ref{fig:SAFP_initialization} illustrates that with each random initialization of $\vw'$, the proposed \ac{SAFP} converges to a fixed point. The example shows that $30$ initializations converge to two different fixed points with utilities $4.35$ and $1.19$ respectively. $\vw^{\ast}$ corresponding to higher utility is chosen as the final solution. 
 
{\it 2) Performance comparison.} We compare the performance of \ac{SAFP} and \ac{RMDI} to the performance of the other three protocols, described in below.
\begin{itemize}
\item {\it FIX:} Fixed ratio and same position of the \ac{UL} and \ac{DL} resource in different cell. \ac{IMI} does not exist due to the orthogonal frequency band for  \ac{UL} and \ac{DL}. The amounts of the \ac{UL} and \ac{DL} resource are fixed to be the same.  
\item {\it dTDD:} Adaptive \ac{UL} and \ac{DL} resource proportional to the traffic volume in each cell independently. 
\item {\it FP:} Proposed algorithm in \cite{liao2016dynamic} (summarized in Proposition \ref{prop:solutionFixedC}) that solves Problem \ref{prob:maxminQoS} with old \ac{SINR} model \eqref{eqn:SINR_old}. 
\end{itemize}
To compare the performance of protocols FIX, dTDD, FP, \ac{SAFP}, and \ac{RMDI} under different traffic asymmetry, we define a measure {\it inter-cell traffic distance},  given by  $D_{m,n}:= \|\ve{\vartheta}_n - \ve{\vartheta}_m\|$, where $\ve{\vartheta}_n:=\left[\vartheta_n^{\ul}, \vartheta_n^{\dl}\right]^T$ characterizes the \ac{UL} and \ac{DL} traffic distribution in cell $n$, and $\vartheta_n^{(x)}:=\left[\mB^{(x)}\vd\right]_n/\Lambda$, $n=1,2$, $x\in\{\text{u}, \text{d}\}$ denotes the fraction of the total traffic $\Lambda$ that traffic of type $x$ in cell $n$ accounts for, such that $\sum_{n\in\Ns} \sum_{\text{x}\in\{\text{u}, \text{d}\}}\vartheta_n^{(\text{x})} =1$. For example, if $\ve{\vartheta}_1 = \ve{\vartheta}_2 = [0.25, 0.25]^T$, we have $D_{1,2} =0$. 

Fig. \ref{fig:cdf_low} and \ref{fig:cdf_high} show the \ac{CDF} of utility $\rho$ derived by applying the five protocols under low and high inter-cell traffic distance, respectively.  The \ac{CDF} is derived from $1000$ simulation run times, each with different user locations and channel propagation, for every combination of the inter-cell traffic distribution ratio in set $\Ts_{\text{inter}}$ and intra-cell traffic distribution ratio in set $\Ts_{\text{intra}}$. All cases with $D_{1,2}\leq 0.5$ are considered as {\it low inter-cell traffic distance}, while with $D_{1,2}>0.5$ as {\it high inter-cell traffic distance}. 

Both Fig. \ref{fig:cdf_low} and \ref{fig:cdf_high} show that \ac{CDF} $F_d^{(\text{dTDD})}(1)>0.95$ for dTDD, implying that {\it service outage probability}, i.e., the probability that at least one service cannot be served with satisfied \ac{QoS} requirement,  is above $95$\%. The performance is worse than protocol FIX with $F_d^{(\text{FIX})}(1)>0.45$. This is because although \ac{UL}/\ac{DL} resource splitting is adapted to the traffic volume, the full occupation of the resource may cause severe \ac{IMI} to some services. Such observation encourages the application of our proposed algorithms, which are able to reduce the interference coupling among services. 
By comparing FP, \ac{SAFP} and \ac{RMDI}, we show that FP further decreases the outage probability to below $20$\%, and \ac{SAFP} and \ac{RMDI} significantly outperform FP, with the outage probability for low traffic distance below $10$\%.
Among the three, \ac{RMDI} provides the best performance of the utility distribution. By comparing Fig. \ref{fig:cdf_low} and \ref{fig:cdf_high}, we observe that \ac{SAFP} and \ac{RMDI} provides even higher performance gain under high traffic asymmetry.

{\it 3) Performance gain depending on traffic asymmetry.}
To analyze the performance gain depending on the traffic asymmetry, we average the utility obtained from $1000$ simulation run times for $D_{1,2}$ falling into the intervals $[0,0.16)$, $[0.16,0.32)$, $[0.32,0.48)$, $[0.48,0.64)$, $[0.64, 0.80)$, $[0.80, 1]$, respectively. Let us consider FIX as the baseline. Fig. \ref{fig:traffic_dist} shows that the performance of FIX decreases with the traffic asymmetry, and the average utility is below $1$ (infeasible \ac{QoS} target) when traffic distance $D_{1,2}>0.6$. Although dTDD adaptively splits the \ac{UL}/\ac{DL} resource, the full occupation of the resource causes severe \ac{IMI}, leading to the worst performance. On the other hand, FP reduces interference coupling among services, and provides $25$\% gain when traffic asymmetry is low, and almost $2$-fold gain when the asymmetry is ultra high. The proposed \ac{SAFP} incorporates interference coupling with \ac{UL}/\ac{DL} resource localization, which improves the gain to $2$-fold when the traffic asymmetry is low while $2.7$-fold when asymmetry is high. The enhanced version \ac{RMDI} further improves the gain by muting partial resource for interference cancellation. The gain is more significant when the traffic is highly asymmetric, achieving $3.2$-fold increase when $D_{1,2}\geq 0.64$.
\begin{figure}[t]
    \centering
		    \begin{subfigure}[t]{1\columnwidth}
        \centering
        \includegraphics[width=.8\columnwidth]{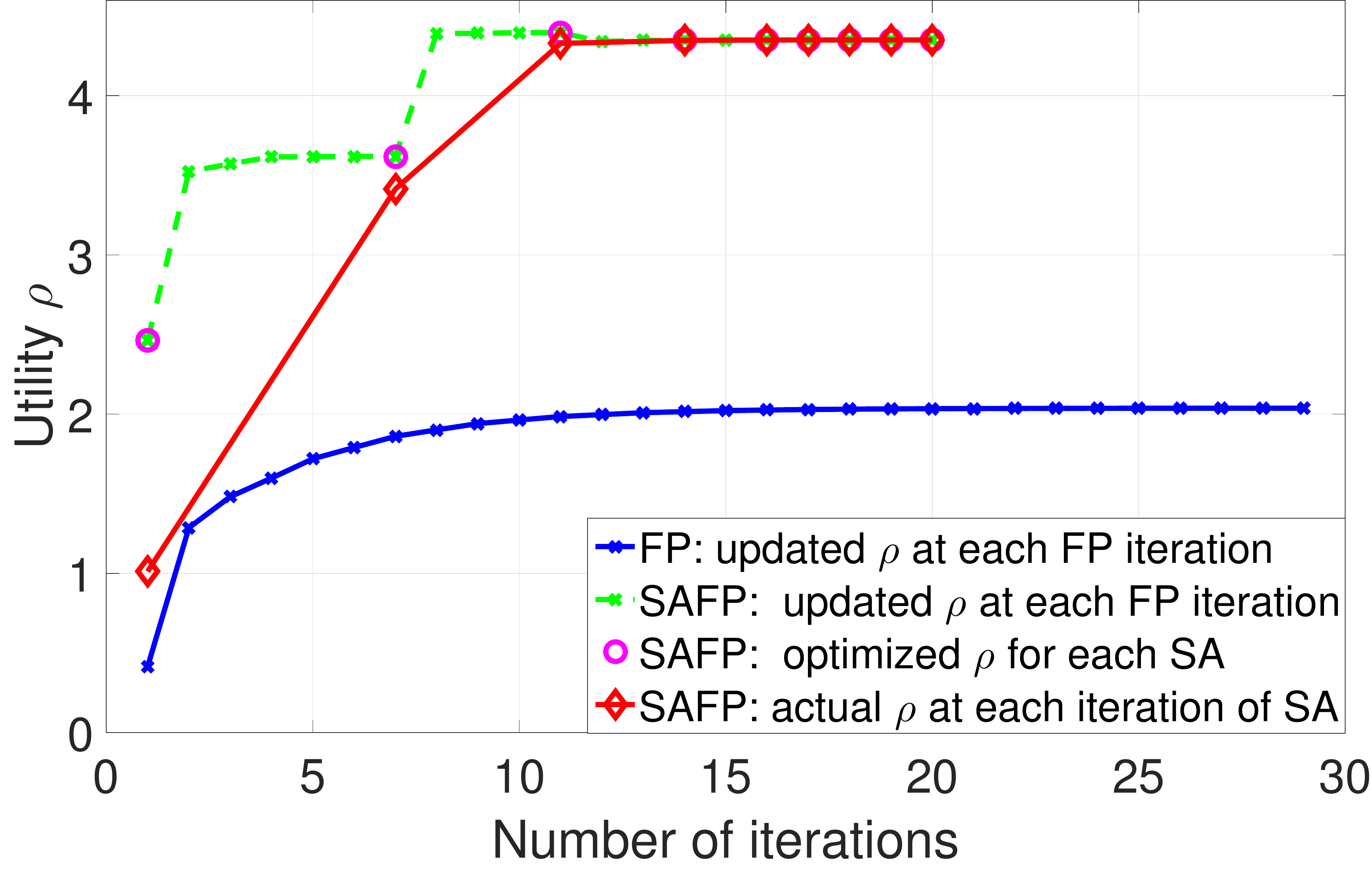}
        \caption{Comparison between FIX and SAFP.}
 \label{fig:SAFP_convergence}
    \end{subfigure}
    \begin{subfigure}[t]{1\columnwidth}
        \centering
        \includegraphics[width=.8\columnwidth]{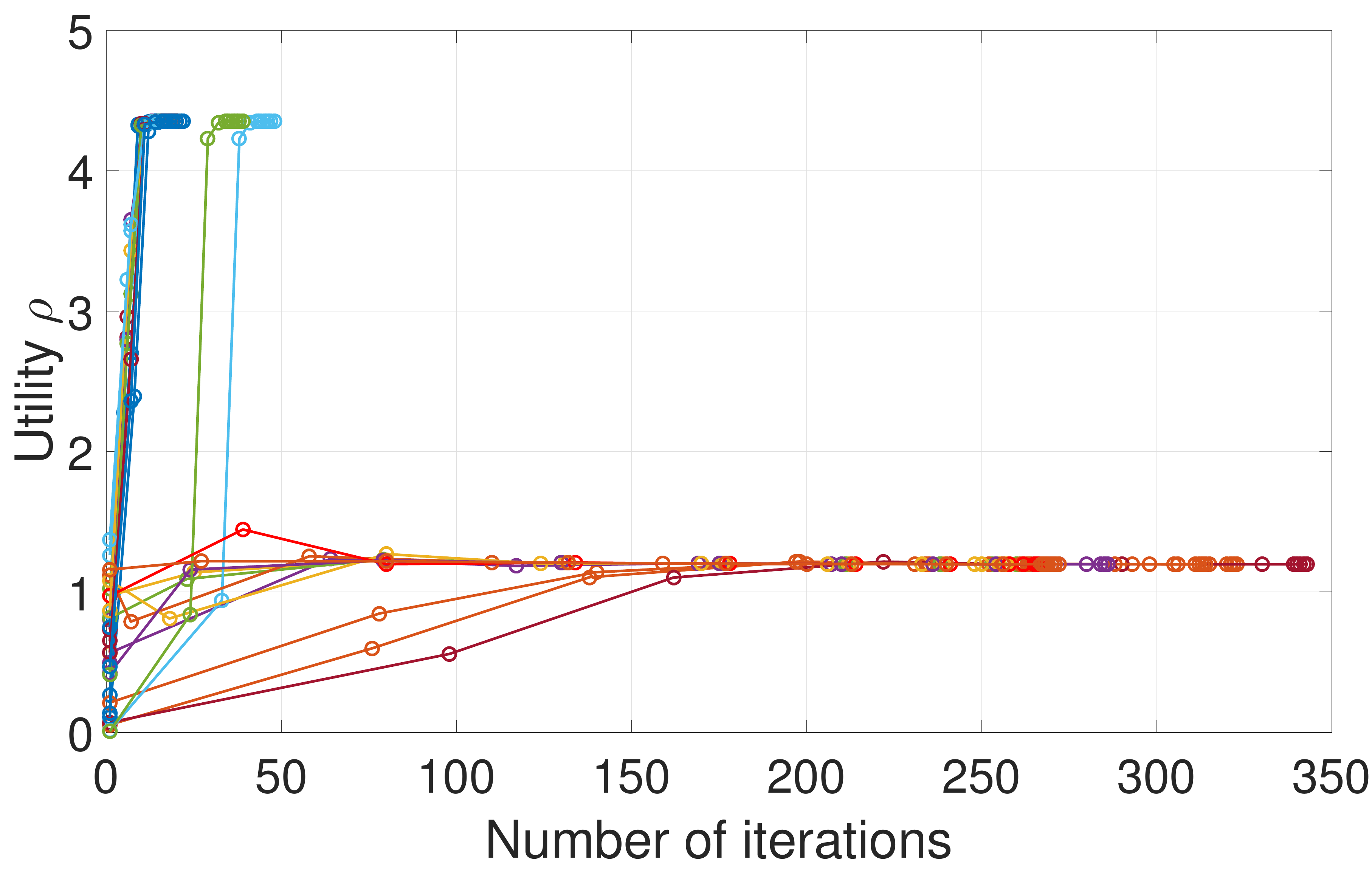}
        \caption{Examination of the random initialization. An example: With $30$ randomly initialized $\hat{\vw}$, SAFP converges to two local optima with $\rho^{\ast}(1) = 4.35$ and $\rho^{\ast}(2) = 1.19$.}
 \label{fig:SAFP_initialization}
    \end{subfigure}
    \caption{Examination of SAFP.}
		\label{fig:SAFP}
\end{figure}
\begin{figure}[t]
    \centering
		    \begin{subfigure}[t]{1\columnwidth}
        \centering
        \includegraphics[width=.8\columnwidth]{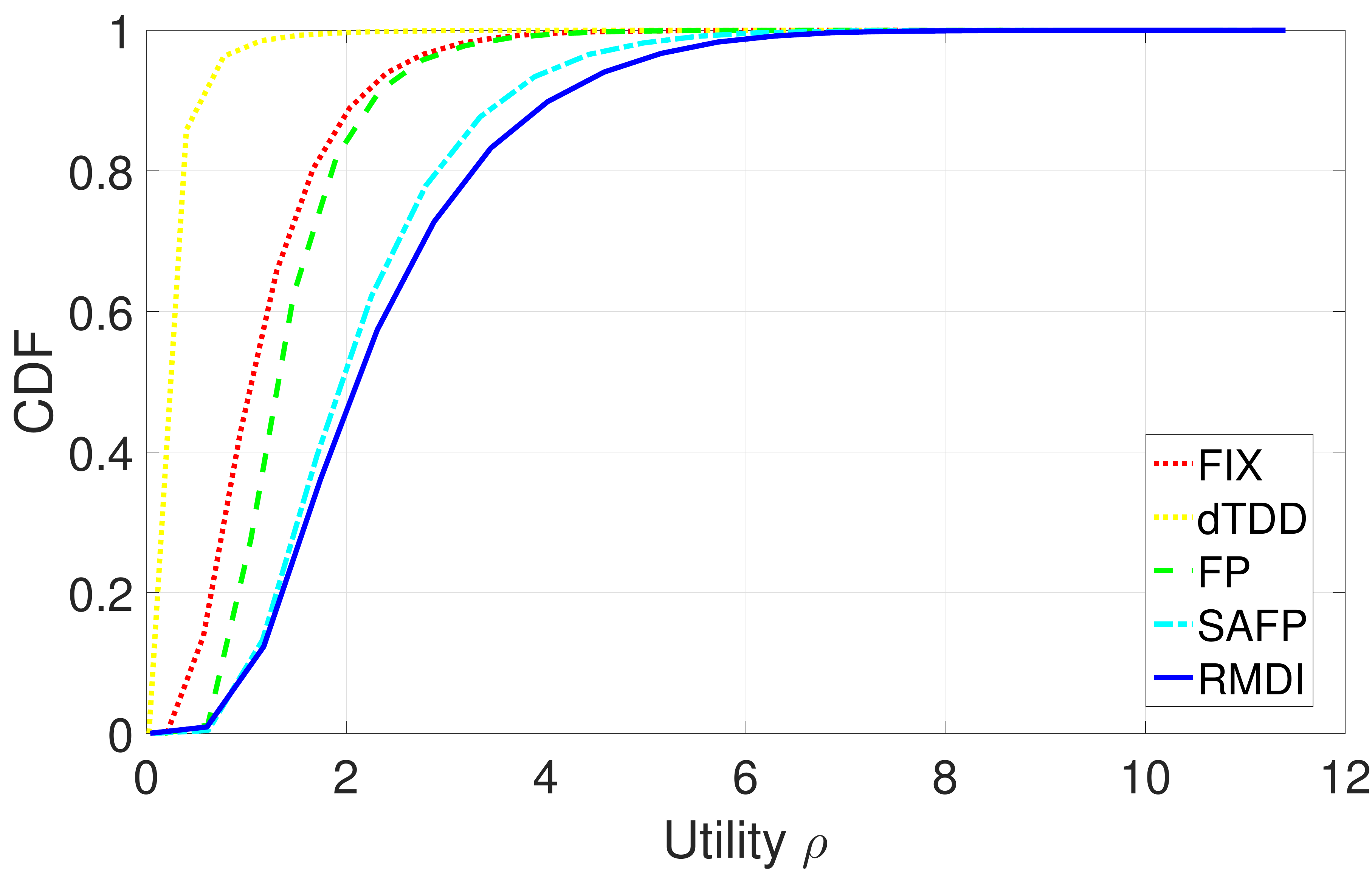}
        \caption{Utility \ac{CDF} under low inter-cell traffic distance.}
 \label{fig:cdf_low}
    \end{subfigure}
    \begin{subfigure}[t]{1\columnwidth}
        \centering
        \includegraphics[width=.8\columnwidth]{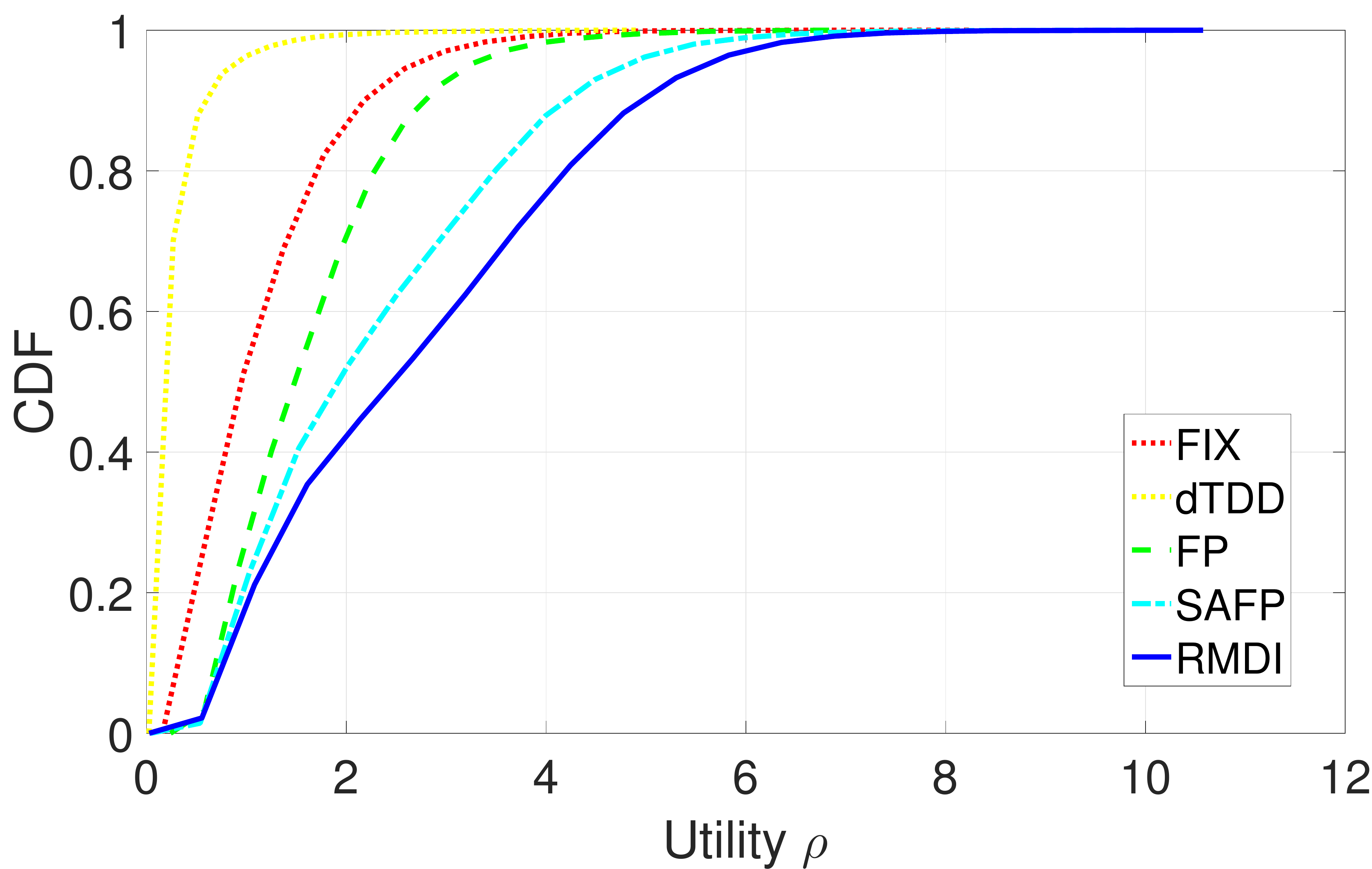}
        \caption{Utility \ac{CDF} under high inter-cell traffic distance.}
 \label{fig:cdf_high}
    \end{subfigure}
		  \begin{subfigure}[t]{1\columnwidth}
        \centering
        \includegraphics[width=.8\columnwidth]{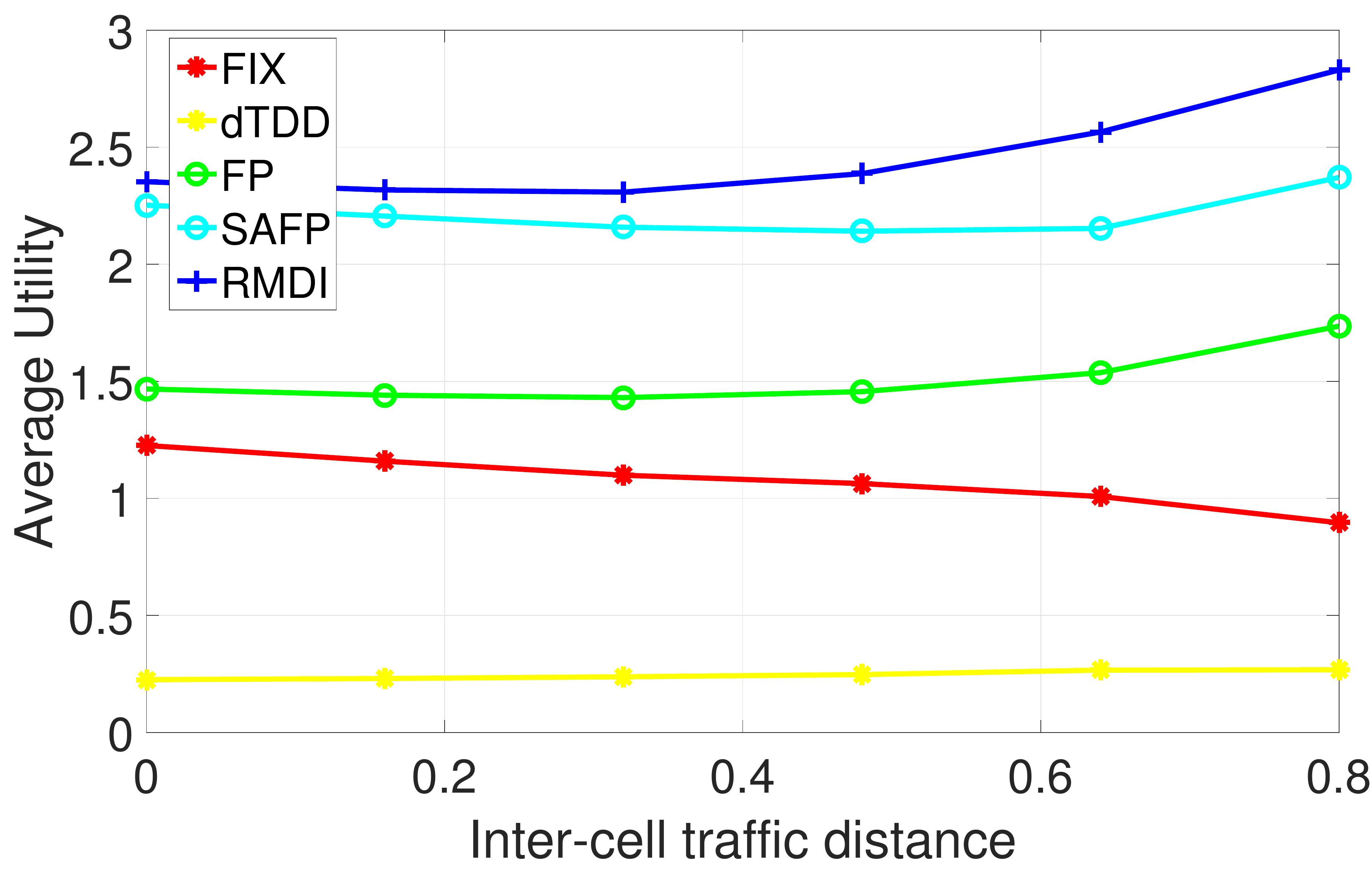}
        \caption{Average utility under different inter-cell traffic distance.}
 \label{fig:traffic_dist}
    \end{subfigure}
    \caption{Performance comparison among protocols.}
		\label{fig:Comparison}
\end{figure}
%

\appendices
\section{}\label{sec:app_Background}
\begin{definition}
A vector function $\vf:\RN^k\to \RP^k$ is a standard interference function (SIF) if the following axioms hold:
\begin{itemize}
\item[1.] (Monotonicity) $\vx \leq \vy$ implies $\vf(\vx)\leq\vf(\vy)$
\item[2.] (Scalability) for each $\alpha>1$, $\alpha\vf(\vx)>\vf(\alpha\vx)$ 
\end{itemize}
\label{def:SIF}
\vspace{-1ex}
\end{definition}
In Definition \ref{def:SIF} we drop positivity from its original definition \cite{Yates95a} because it is a consequence of the other two properties \cite{leung2004convergence}.


\acrodef{3GPP}{3rd generation partnership project}
\acrodef{5G}{fifth generation}
 \acrodef{ABS}{almost blank subframe}
    \acrodef{BS}{base station}
    \acrodef{CDF}{cumulative distribution function}
    \acrodef{CSI}{channel state information}
    \acrodef{CQI}{channel quality indicator}
\acrodef{DL}{downlink}
    \acrodef{DUDe}{downlink and uplink decoupling}
\acrodef{eICIC}{enhanced intercell interference coordination}
\acrodef{ESD}{energy spectral density}
\acrodef{FDD}{frequency division duplex}
    \acrodef{FDMA}{frequency division multiple access}
   \acrodef{GP}{Gaussian process}
    \acrodef{GPS}{global positioning system}
\acrodef{HetNet}{heterogeneous network}
    \acrodef{ICI}{inter-cell interference}
		\acrodef{IMI}{inter-mode interference}
\acrodef{LTE}{long term evolution}

\acrodef{MAC}{media access control}
\acrodef{MRU}{minimum resource unit}
   \acrodef{OFDM}{orthogonal frequency division multiplexing}
    \acrodef{PDF}{probability density function}
    \acrodef{PHY}{physical layer}
		\acrodef{PSD}{power spectral density}
    \acrodef{PRB}{physical resource block}
   \acrodef{QoE}{quality of experience}
    \acrodef{QoS}{quality of service}
    \acrodef{RAN}{radio access network}
		\acrodef{RBS}{removal of bottleneck services}
		\acrodef{RMDI}{resource muting for dominant interferer}
    \acrodef{RRM}{radio resource management}
		\acrodef{RU}{resource unit}
		\acrodef{RX}{receiver}
 \acrodef{SAFP}{successive approximation of fixed point}
    \acrodef{SDN}{software defined network}
    \acrodef{SNR}{signal-to-noise ratio}
    \acrodef{SINR}{signal-to-interference-plus-noise ratio}
\acrodef{SIR}{signal-to-interference ratio}
\acrodef{SIF}{standard interference function}
    \acrodef{SVM}{support vector machine}
    \acrodef{TCP}{transmission control protocol}
		\acrodef{TDD}{time division duplex}
    \acrodef{TDMA}{time division multiple access}
		\acrodef{TTI}{transmission time interval}
		\acrodef{TX}{transmitter}
		\acrodef{UE}{user equipment}
		\acrodef{UL}{uplink}
    \acrodef{WLAN}{wireless local area network}

\bibliographystyle{IEEEtran}
\bibliography{refs}
		
\end{document}